\documentclass[12pt]{iopart}

\usepackage{iopams, bbm,amsthm}

\newcommand{\herm}[1]{\mathrm{Herm}\left(#1\right)}
\newcommand{\lin}[1]{\mathrm{L}\left(#1\right)}
\newcommand{\unitary}[1]{\mathrm{U}\left(#1\right)}

\newcommand{\ip}[2]{\langle #1 , #2\rangle}
\newcommand{\norm}[1]{\| #1 \|}
\newcommand{\abs}[1]{| #1 |}
\newcommand{\op}[2]{|#1\rangle\langle#2|}

\newcommand\Hilbert{\mathcal H}

\newcommand\tF{\mathtt F}

\newcommand\fF{\mathfrak F}

\newcommand\E{\hat E}
\newcommand\F{\hat F}
\newcommand\A{\hat A}
\newcommand\B{\hat B}
\newcommand\M{\hat M}

\newcommand{\pr}{\mathrm{Pr}}

\newcommand{\id}{\mathbbm 1}

\newcommand\integer{\mathbb{Z}}

\newcommand\pure{\mathrm{pure}}

\newtheorem{theorem}{Theorem}
\newtheorem{definition}{Definition}

\begin{document}

\title[Frame representations]{Frame representations of quantum mechanics and the necessity of negativity in quasi-probability representations}
\author{Christopher Ferrie$^{1,2}$ and Joseph Emerson$^{1,2}$}
\address{$^1$ Department of Applied Mathematics, University of Waterloo, Waterloo, Ontario, Canada, N2L 3G1}
\address{$^2$ Institute for Quantum Computing, University of Waterloo, Waterloo, Ontario, Canada, N2L 3G1}
\eads{\mailto{csferrie@uwaterloo.ca} and \mailto{jemerson@uwaterloo.ca}}
\begin{abstract}
Several finite dimensional quasi-probability representations of
quantum states have been proposed to study various problems in
quantum information theory and quantum foundations.  These
representations are often defined only on restricted dimensions and
their physical significance in contexts such as drawing
quantum-classical comparisons is limited by the non-uniqueness of
the particular representation. Here we show how the mathematical
theory of frames provides a unified formalism which accommodates all
known quasi-probability representations of finite dimensional
quantum systems.  Moreover, we show that any quasi-probability
representation is equivalent to a frame representation and then
prove that any such representation of quantum mechanics must exhibit
either negativity or a deformed probability calculus.
\end{abstract}
%\pacs{}
%\submitto{\JPA}
\maketitle

\section{Introduction}

The Wigner function \cite{Wig32} is a
\emph{quasi-probability} density on a classical phase space which
represents a quantum state. The term quasi-probability refers to the fact
that the function is not a true density as it takes on negative values for some quantum
states. As is well known, the Wigner formalism can be lifted into a
fully autonomous phase space theory which reproduces all the
predictions of quantum mechanics \cite{Bak58}.

In recent years various phase space and other quasi-probability
representations of finite dimensional quantum systems have been
proposed.  In the remainder of the paper, whenever we refer to a representation of quantum systems, we implicitly mean a representation of \emph{finite-dimensional} quantum systems (of dimensional $d$).  For example, the Wootters \cite{Woo87} phase space
function is defined on an $d\times d$ lattice indexed by the
integers modulo $d$ for all $d$ dimensional Hilbert spaces where $d$
is prime - it is defined on the Cartesian product of these lattices
whenever $d$ is composite. Leonhardt \cite{Leo95} introduced a
four-fold redundant phase space function on a $2d\times 2d$ lattice
indexed by the integers modulo $2d$ which is valid for $d$ an even
integer. Heiss and Weigert \cite{HW00} have defined a phase space
function on $d^2$ points embedded in the sphere $\mathcal S^2$.
Gibbons, Hoffman, and Wootters \cite{GHW04} introduced a discrete
Wigner function on an $d\times d$ lattice indexed by elements of a
finite field of dimension $d$ when $d$ is a power of a prime number.
There are several others (for a recent review see \cite{Vou04}). In
addition to these discrete phase space functions, continuous phase
space representations of finite dimensional quantum state have also
been introduced \cite{VG72}, as well as more general
quasi-probability representations \cite{Har01,Hav03}, which are
real-valued representations that do not necessarily reflect any
preconceived classical phase space structure.

Such representations have provided insight into fundamental structures for finite-dimensional quantum systems.
For example, the representation proposed by Wootters identifies sets of
mutually unbiased bases \cite{Woo87,GHW04}. Inspired by the discovery that quantum resources
lead to algorithms that dramatically outperform their classical
counterparts, there has also been growing interest in the application of
discrete phase representation to analyze the quantum-classical
contrast for finite-dimensional systems, for example, quantum teleportation
\cite{Paz02}, the effect of decoherence on quantum walks
\cite{LP03}, quantum Fourier transform and Grover's algorithm
\cite{MPS02}, conditions for exponential quantum computational
speedup \cite{Gal05,CGG06}, and quantum expanders \cite{GJ07}.

A central concept in studies of the quantum-classical contrast in
the quasi-probability formalisms of quantum mechanics is the
appearance of \emph{negativity}.  A non-negative quasi-probability
function is a true probability distribution, prompting some authors
to suggest that the presence of negativity in this function is a
defining signature of non-classicality. However, a quantum state can
be negative in one representation and positive in another. This
simple fact underscores the obvious problem that considering any one
of these quasi-probability representations in the context of
determining criteria for the non-classicality of a given quantum
state is inadequate due to the non-uniqueness of that particular
representation. Ideally one would like to determine whether the
state can be expressed as a classical state in \emph{any}
quasi-probability representation. Indeed the sheer variety of
proposed quasi-probability representations prompts the question of
whether there is some shared underlying mathematical structure that
might provide a means for identifying the full family of such
representations.  The first goal of the work presented here is to
provide such a unifying formalism.

Moreover, from an operational point of view, states alone are an
incomplete description of an experimental arrangement.  For example,
Reference \cite{CGG06} proves that, within the class of
quasi-probability representations due to Gibbons et al \cite{GHW04},
the only positive pure states are a subset of the so-called
stabilizer states.  The authors note that these states are
``classical'' from the point of view of allowing an efficient
classical simulation via the stabilizer formalism. However, this set
of positive states includes the Bell states - states which
(maximally) violate a Bell inequality -  and hence these states are
maximally \emph{non-classical} according to a far more conventional
criterion of classicality: locality.

The resolution of this paradox is that one must also consider the
representation of measurements in the quasi-probability
representation in order to assess the classicality of a complete
\emph{experimental procedure}. Hence, it is important to elucidate
the ways in which a quasi-probability representation \emph{of states
alone} can be lifted to an autonomous quasi-probability
representation of \emph{both the states and measurements} defining a
set of complete experimental configurations. Indeed, in the
representation consider above, although the Bell state has a
positive representation, one can show that the conditional
probabilities representing the measurements assume negative values
(and hence are non-classical in a sense we make precise below).
Indeed the second goal of this work is to determine the full set of
possible quasi-probability representations of both states and
measurements. Only by considering this full set of quasi-probability
representations is it possible to establish a meaningful sense in
which the appearance of negativity provides a rigorous notion of
non-classicality, i.e., either the states or effects (or both) must
exhibit negativity in all such representations (otherwise a
classical representation does exist).

The outline of the paper is as follows.  In Section \ref{frames}, we
show how the mathematical theory of frames, which has been developed
in the context of signal analysis to devise methods of representing
information redundantly in order to protect it against noise
\cite{Chr03}, provides a formalism which underlies all known
quasi-probability representations of finite dimensional quantum
states. In Section \ref{quantum}, we show that there are two ways in
which any quasi-probability representation of states can be extended
to include a representation of measurements, and hence lifted to a
fully autonomous formulation of finite dimensional quantum
mechanics.  In Section \ref{negative}, we prove that any
representation that reproduces all of the predictions of quantum
mechanics must either i) exhibit negativity in the quasi-probability
functions for either states or measurements or ii) make use of a
deformed probability calculus, and then clarify in which sense these
correspond to non-classical properties. We conclude in Section
\ref{conclusion} by discussing how our formalism can be applied to
determine when non-classical resources are present in an
experimental system or a given quantum information task which
involve only a restricted set of preparations and measurements. In
the discussion we also connect our results with recent independent
work \cite{Spe07} which establishes negativity and contextuality as
equivalent criteria of the non-classicality of quantum mechanics.

\section{Frame Representations of Quantum States\label{frames}} From an operational point of view, the formulation of
quantum mechanics requires only the Hermitian operators acting on a
complex Hilbert space $\Hilbert$ with some finite dimension $d$ \cite{Har01}. The
Hermitian operators themselves form a real Hilbert space
$\herm\Hilbert$ of dimension $d^2$ with inner product
$\ip{\A}{\B}:=\tr(\A\B)$.

A basis is a linearly independent set that spans $\herm\Hilbert$.    A \emph{frame} is a
generalization of the notion of a basis.  Let $\Gamma$ be some set
with positive measure $\mu$.  The space of real valued square integrable functions on $\Gamma$ is denoted $L^2(\Gamma,\mu)$.
A \emph{frame} for $\herm\Hilbert$ is
a mapping $\F:\Gamma\to\herm\Hilbert$ which satisfies
\begin{equation}\label{def_frame}
a\norm{\A}^2\leq\int_\Gamma \rmd\mu(\alpha)\abs{\ip{\F(\alpha)}{\A}}^2\leq b\norm{\A}^2,
\end{equation}
for all $\A\in\herm\Hilbert$ and some constants $a,b>0$.  Note that (for finite-dimensional Hilbert spaces) a frame is equivalent to a spanning set which need not be linearly independent.  Such a linearly dependent spanning set is sometimes called an ``overcomplete basis''.
\begin{definition}
A mapping $\herm\Hilbert\to L^2(\Gamma,\mu)$ of the form
\begin{equation}\label{def_frame_rep}
\A\mapsto A(\alpha):=\ip{\F(\alpha)}{\A},
\end{equation}
where $\F$ is a frame, is a \emph{frame representation} of $\herm\Hilbert$.
\end{definition}
A \emph{dual frame} is a frame $\E:\Gamma\to\herm\Hilbert$ which satisfies
\begin{equation}\label{def_dual}
\A=\int_\Gamma \rmd\mu(\alpha)\ip{\F(\alpha)}{\A}\E(\alpha),
\end{equation}
for all $\A\in\herm\Hilbert$.  When $\Gamma$ is finite and
$\abs \Gamma=d^2$, the dual frame is the unique, otherwise there are infinitely many choices for a
dual frame.

Here we present two examples of frames.  Consider the operator $\hat Z$ whose spectrum is $\mathrm{spec} (\hat
Z)=\{\rme^ { \frac{2\rmi\pi}{d}k}: k\in \mathbb Z_d\}$.  The
eigenvectors form a basis for $\Hilbert$ and are denoted
$\{\phi_k:k\in\mathbb Z_d\}$. Consider also the operator defined by
$\hat X\phi_k=\phi_{k+1}$, where all arithmetic is modulo $d$.  The
operators $\hat Z$ and $\hat X$ are often called \emph{generalized
Pauli operators} since they are indeed the usual Pauli operators
when $d=2$.  The \emph{parity operator} is defined by $\hat P\phi_k=\phi_{-k}$.
Let $\Gamma=\mathbb Z_d\times\mathbb Z_d$ and $\mu$ be the counting measure and suppose $d$ is prime.  Consider the map $\F:\Gamma\to \herm\Hilbert$ defined by
\begin{equation}
\F(q,p)=\frac{1}{d^2}\hat X^{2q}\hat Z^{2p}\hat P \rme^{\frac{4\rmi \pi}{d}qp}.\label{Woottersframe}
\end{equation}
This map is a frame for $\herm\Hilbert$.  It also is an orthogonal basis for $\herm\Hilbert$ and thus the dual frame is unique.  Now let $\Gamma=\mathbb Z_{2d}\times\mathbb Z_{2d}$ and $\mu$ be the counting measure and suppose $d$ is even.  Consider the map $\F:\Gamma\to \herm\Hilbert$ defined by
\begin{equation}
\F(q,p)=\frac{1}{4d^2}\hat X^{q}\hat Z^{p} \hat P \rme^{\frac{\rmi \pi}{d}qp}.\label{evenframe}
\end{equation}
This map is also a frame.  However $\abs \Gamma=4d^2$ and thus the dual frame cannot be unique.

We propose the following as a minimal requirement for the definition of a quasi-probability representation of quantum \emph{states}.
\begin{definition}
A \emph{quasi-probability representation of quantum states} is any map
$\herm\Hilbert\to L^2(\Gamma,\mu)$ that is linear and invertible.
\end{definition}
Given this definition, any phase space representation is
then a particular type of quasi-probability representation.  In particular, if there exists symmetry group on $\Gamma$, $G$, carrying a
unitary representation $\hat U:G\to\unitary\Hilbert$ and  a
quasi-probability representation satisfying the covariance property
$\hat U_g\A\hat U_g^\dag\mapsto \{A(g(\alpha))\}_{\alpha\in\Gamma}$ for
all $\A\in\herm\Hilbert$ and $g\in G$, then $\hat A\mapsto A(\alpha)$ is a \emph{phase space
representation}.
All phase space functions (that we are aware of) in the literature
correspond to quasi-probability representations that satisfy this additional covariance condition.

It is clear that a frame representation defined by \Eref{def_frame_rep} is a linear bijection and hence a quasi-probability representation.  Thus the frames defined by \Eref{Woottersframe} and \Eref{evenframe} are quasi-probability representations.  Indeed, \Eref{Woottersframe} defines the phase space quasi-probability function defined by Wootters \cite{Woo87} while \Eref{evenframe} defines the phase space quasi-probability function defined by Leonhardt \cite{Leo95}.  It is less obvious that the converse is also true.  Nevertheless, the following theorem verifies this fact.
\begin{theorem}
If A mapping $W$ is a quasi-probability representation, then it is a frame representation for a unique frame $\F$.
\end{theorem}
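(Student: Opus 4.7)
The plan is to recover the frame pointwise from $W$ via the Riesz representation theorem on the finite-dimensional real Hilbert space $\herm{\Hilbert}$, and then derive the frame inequality from the equivalence of all norms in finite dimensions.

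First, for each fixed $\alpha\in\Gamma$ I would define the evaluation functional $L_\alpha:\herm{\Hilbert}\to\reals$ by $L_\alpha(\A):=W(\A)(\alpha)$. This is linear in $\A$ because $W$ is linear. Since $\herm{\Hilbert}$ with inner product $\ip{\A}{\B}=\tr(\A\B)$ is a real Hilbert space of (finite) dimension $d^2$, the Riesz representation theorem produces a unique element $\F(\alpha)\in\herm{\Hilbert}$ such that
\begin{equation*}
W(\A)(\alpha)=L_\alpha(\A)=\ip{\F(\alpha)}{\A}\qquad\text{for all }\A\in\herm{\Hilbert}.
\end{equation*}
Uniqueness of the frame is then automatic: if some $\F'$ also satisfied $W(\A)(\alpha)=\ip{\F'(\alpha)}{\A}$ for every $\A$, then $\ip{\F(\alpha)-\F'(\alpha)}{\A}=0$ for all $\A$, which by non-degeneracy of $\ip{\cdot}{\cdot}$ forces $\F(\alpha)=\F'(\alpha)$.

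It remains to verify the frame bounds \Eref{def_frame}. The point is that
\begin{equation*}
\int_\Gamma \rmd\mu(\alpha)\,\abs{\ip{\F(\alpha)}{\A}}^2=\int_\Gamma \rmd\mu(\alpha)\,\abs{W(\A)(\alpha)}^2=\norm{W(\A)}_{L^2}^2.
\end{equation*}
Because $W$ is linear and invertible, hence in particular injective, the expression $\A\mapsto\norm{W(\A)}_{L^2}$ defines a genuine norm on $\herm{\Hilbert}$: it vanishes only when $W(\A)=0$, i.e. when $\A=0$. Since $\herm{\Hilbert}$ is finite dimensional, all norms on it are equivalent, so there exist constants $a,b>0$ with $a\norm{\A}^2\leq\norm{W(\A)}_{L^2}^2\leq b\norm{\A}^2$, which is exactly \Eref{def_frame}.

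The main subtlety I anticipate is that elements of $L^2(\Gamma,\mu)$ are really equivalence classes of functions, so pointwise evaluation $W(\A)(\alpha)$ is not literally well defined for an arbitrary $\alpha$. This is essentially a cosmetic issue here, however, because the image of $W$ is a $d^2$-dimensional subspace of $L^2(\Gamma,\mu)$ spanned by $W(\hat B_1),\dots,W(\hat B_{d^2})$ for any basis of $\herm{\Hilbert}$; once one fixes representatives of these finitely many classes, pointwise evaluation is unambiguously and linearly defined on the entire image of $W$, and the Riesz construction above goes through verbatim.
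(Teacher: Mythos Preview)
Your proof is correct and follows essentially the same route as the paper: recover $\F$ pointwise via the Riesz representation theorem, then obtain the frame bounds from finite-dimensionality. The only cosmetic difference is that the paper phrases the lower bound via the bounded inverse theorem whereas you invoke the equivalence of all norms on a finite-dimensional space; these are the same fact in this setting, and your added remark about choosing representatives in $L^2$ is a welcome clarification the paper leaves implicit.
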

\begin{proof}
Linearity and the Riesz
representation theorem implies that
$W(\A)(\alpha)=\ip{\F(\alpha)}{\A}$ for some unique mapping
$\F:\Gamma\to\herm \Hilbert$ (not necessarily a frame).  Since $\herm\Hilbert$ is finite dimensional, the inverse $W^{-1}$ is bounded.  Thus $W$ is bounded below by the bounded inverse theorem. That is, there exists a constant $a>0$ such that
\begin{equation*}
a\norm{\A}^2\leq\int_\Gamma \rmd\mu(\alpha)\abs{\ip{\F(\alpha)}{\A}}^2.
\end{equation*}
Since $\ip{\F(\alpha)}{\A}\in L^2(\Gamma,\mu)$, there exists a constant $b>0$ such that
\begin{equation*}
\int_\Gamma \rmd\mu(\alpha)\abs{\ip{\F(\alpha)}{\A}}^2\leq b\norm{\A}^2.
\end{equation*}
Hence $\F$ is a frame.
\end{proof}

\section{Frame Representations of Quantum Mechanics\label{quantum}} Most proposed
phase space functions (of finite-dimensional quantum systems) are representations of
quantum states alone.  Here we show that there are two approaches within the frame
formalism to lift any representation of states to a fully autonomous
representation of finite dimensional quantum mechanics.

An operational set of axioms \cite{Har01} of quantum mechanics are
\begin{enumerate}
\item[(i)] There exists a Hilbert space $\Hilbert$, $\dim\Hilbert =d$.
\item[(ii)] A preparation (state) is represented by a density operator $\hat\rho$ satisfying
$\ip{\psi}{\hat\rho\psi}\geq0$, for all $\psi\in\Hilbert$, and $\tr(\hat\rho)=1$.
\item[(iii)] A measurement is represented by a set of effects $\{\M_k\}$, i.e., positive operator valued
measure (POVM),  satisfying $0\leq\ip{\psi}{\M_k\psi}\leq1$, for all
$\psi\in\Hilbert$, and $\sum_k\M_k=\hat\id$.
\item[(iv)] For a system with density operator $\hat\rho$ subject to the measurement $\{\M_k\}$, the probability of obtaining outcome $k$ is given by the Born rule
\begin{equation}
\pr(k)=\tr(\M_k\hat\rho).
\end{equation}
\end{enumerate}

%Either remove OR add as footnote: The scope of this paper
%does not include transformations and a full treatment of
%transformations and dynamics in the frame representation will be
%presented in a later paper. However, at the operational level, this
%imposes no loss in generality since an experiment can be modeled
%consistently by only preparations and measurements.

Hence to construct an autonomous formulation of quantum mechanics we
need a set of functions $\{M_k\}$ on phase space representing the
set of measurement operators $\{\M_k\}$, as well as a prescription
for calculating the probabilities that are prescribed by the Born
rule.

\subsection{Deformed probability representations}
The first frame representation approach to an autonomous formulation
of quantum mechanics consists of mapping both states and
measurements to $L^2(\Gamma, \mu)$ via the same frame $\F$, i.e.
$\hat\rho\mapsto \rho(\alpha):=\ip{\hat\rho}{\F(\alpha)}$ and
$\M_i\mapsto M_i(\alpha):=\ip{\M_i}{\F(\alpha)}$.  The functions in
the range of this frame representation, when the domain is
restricted to the density operators, are called
\emph{quasi-probability densities}. Similarly, the functions in the
range of the frame representation, when the domain is restricted to
the effects, are called \emph{conditional quasi-probabilities}. Then
the axioms of quantum mechanics become
\begin{enumerate}
\item There is a measurable set of allowed properties ${\Gamma}$ endowed with a positive measure $\mu$.
\item A preparation (state)
is represented by a quasi-probability density
$\rho(\alpha)\in\mathbb R$ which satisfies the normalization
condition $\int_{\Gamma} d\mu( \alpha) \rho( \alpha)=1$.
\item A measurement is represented by a set of conditional quasi-probabilities
$\{M_k(\alpha)\in\mathbb R\}$ which satisfies $\sum_k
M_k( \alpha)=1$ for all $ \alpha\in {\Gamma}$.
\item For a system with
quasi-probability density $\rho$ subject to the measurement $\{M_k\}$, the probability of obtaining outcome $k$ is given by
\begin{equation}\label{deformed}
\pr(k)=\int_{\Gamma} d\mu( \alpha,\beta) \rho( \alpha) M_k( \beta)\ip{\E(\alpha)}{\E(\beta)},
\end{equation}
where $\E$ is any frame dual to $\F$.
\end{enumerate}

As will become clear in the next section, \Eref{deformed} is a
\emph{deformed} version of the usual law of total probability and
hence we call this first approach a \emph{deformed probability
representation of quantum mechanics}.  This is not the only
possibility.  Indeed, below we will see a different approach.

\subsection{Quasi-probability representations}
Notice that the deformed probability calculus \Eref{deformed} can be written
\begin{equation}
\pr(k)=\int_{\Gamma}\rmd\mu(\alpha)\;{\rho(\alpha)}\tilde M_k(\alpha),\label{lawtotalprob}
\end{equation}
where
\begin{equation}
\tilde M_k(\alpha)=\int_{\Gamma}\rmd\mu(\beta)\;M_k(\beta)\ip{\E(\alpha)}{\E(\beta)}.\label{M}
\end{equation}
Recall that $M_k$ is the frame representation of $\M_k$ for the
frame $\F$.  Hence $\tilde M_k$ can be identified as the frame
representation of $\M_k$ using a frame $\E$ that is dual to $\F$.
The second frame representation approach to an autonomous
formulation consists of mapping density operators to functions in
$L^2(\Gamma, \mu)$ via a particular choice of frame $\F$ and effects
to functions in $L^2(\Gamma,\mu)$ via a frame $\E$ that is dual to
$\F$, i.e. $\hat\rho\mapsto \rho(\alpha):=\ip{\hat\rho}{\F(\alpha)}$
and $\M_k\mapsto M_k(\alpha):=\ip{\M_k}{\E(\alpha)}$.  As above we
define the former functions to be quasi-probability densities and
the latter functions to be conditional quasi-probabilities.  The
axioms of quantum mechanics can be reformulated once again as
\begin{enumerate}
\item There is a set of allowed properties ${\Gamma}$ with a positive measure $\mu$.
\item A preparation (state)
is represented by a quasi-probability density
$\rho(\alpha)\in\mathbb R$ which satisfies the normalization
condition $\int_{\Gamma} d\mu( \alpha) \rho( \alpha)=1$.
\item A measurement is represented by a set of conditional quasi-probabilities
$\{\tilde M_k(\alpha)\in\mathbb R\}$ which satisfies $\sum_k
\tilde M_k( \alpha)=1$ for all $ s\in {\Gamma}$.
\item For a system with
probability density $\rho$ subject to the measurement $\{\tilde M_k\}$, the probability of obtaining outcome $k$ is given by \Eref{lawtotalprob}.
\end{enumerate}

As will become clear in the nexr section, the probability calculus
\Eref{lawtotalprob} given in condition (iv) is now just the usual
law of total probability, although the preparation and measurement
functions are not necessarily positive semi-definite.  Hence we call
this second approach a \emph{quasi-probability representations of
quantum mechanics} (i.e., a quasi-probability representation of both
states \emph{and} measurements).

Note that for quasi-probability representation the frame and its dual are required.  Recall that the frames given in Equations \eref{Woottersframe} and \eref{evenframe} defined the phase space quasi-probability functions of Wootters and Leonhardt.  For the Wootters case, the dual frame is unique and is given by
\begin{equation*}
\E(q,p)=\frac{1}{d}\hat X^{2q}\hat Z^{2p}\hat P \rme^{\frac{4\rmi \pi}{d}qp}.\label{Woottersdualframe}
\end{equation*}
For the Leonhardt case, the frame in \Eref{evenframe} is not a basis and the dual is not unique.  However, a quasi-probability representation of quantum mechanics only require \emph{a} dual frame.  One such dual frame is
\begin{equation*}
\E(q,p)=\frac{1}{2d}\hat X^{q}\hat Z^{p} \hat P \rme^{\frac{\rmi \pi}{d}qp}.\label{evendualframe}
\end{equation*}

\section{Non-classicality: negative quasi-probability or a deformed law of total probability\label{negative}}
Let the set $\Gamma$ represent the properties of a classical system
and the function $\rho(\alpha)>0$ represent the probabilistic
knowledge of these properties.   Note that these probability
densities form a convex set with the Dirac measures as its extreme
points.  A measurement is a partitioning of the space $\Gamma$ into
disjoint subsets $\{\triangle_j\}$.  The probability of the system
to have properties in $\triangle_j$ (we will call this ``outcome
$j$'') is
\begin{equation*}
\pr(j)=\int_{\triangle_j}\rmd\mu(\alpha)\;\rho(\alpha)=\int_\Gamma \rmd\mu(\alpha) \chi_{j}(\alpha)\rho(\alpha),
\end{equation*}
where $\chi_j(\alpha)\in\{0,1\}$ is the indicator function of
$\triangle_j$.  The measurement is equivalently specified by the set
$\{\chi_j(\alpha)\}$, which is interpreted as the conditional
probability of outcome $k$ given the systems is known to have the
properties $\alpha$.  A measurement of this type is deterministic:
it reveals with certainty the properties of the system.  Consider
now an indeterministic measurement specified by the conditional
probabilities $\{M_k(\alpha)\in[0,1]\}$.  These can always be
thought of as a convex combination of indicator functions.  That is,
the measurement functions form a convex set with the indicator
functions as its extreme points.  We summarize the above description
with the following definition.
\begin{definition}
Any statistical or operational model of a set of experimental
configurations is  \emph{classical} if all of the following
properties hold:
\begin{enumerate}
\item There is a set of allowed properties $\Gamma$ with a positive measure $\mu$.
\item A preparation (state)
is represented by a probability density $\rho(\alpha) \geq 0$ which
satisfies the normalization condition $\int_\Gamma d\mu(\alpha)
\rho(\alpha)=1$.
\item A measurement is represented by a set
$\{M_k(\alpha)\in[0,1]\}$ which satisfies $\sum_k
M_k(\alpha)=1$ for all $\alpha\in \Gamma$.
\item For a system with
probability density $\rho$ subject to the measurement $\{M_k\}$, the
probability of obtaining outcome $k$ is given by the law of total
probability
\begin{equation}\label{classical_prob_def}
\pr(k)=\int_\Gamma \rmd\mu(\alpha) \rho(\alpha) M_k(\alpha).
\end{equation}
\end{enumerate}
\end{definition}

Consider now a frame representation defined via a \emph{positive}
frame $\F$. Applying the deformed probability representation (the
first approach of the previous section) to map quantum mechanics to
the space of functions $L^2(\Gamma,\mu)$, we find that the
representations of the preparations and measurements satisfy the
criteria of the \emph{classical} model because they are guaranteed
to be non-negative function when the frame  $\F$ is positive.
However, as noted previously, the calculation of probabilities
\Eref{deformed} is \emph{deformed} when compared to the classical
one \Eref{classical_prob_def}. Hence under this approach the
associated frame representations do not meet the criteria of a
classical model.

Now, applying instead the quasi-probability representation (the
second approach of the previous section) the probability calculus
\Eref{lawtotalprob} is the same as the classical one
\Eref{classical_prob_def}. Furthermore the preparations are
represented by non-negative functions (because the frame  $\F$ is
positive) and therefore also meet the criteria set out by condition
(ii). However, in this case the measurements must be represented via
a frame $\E$ which is dual to the frame $\F$ that is used for
representing preparations. It is not immediately obvious that any
quasi-probability representation of quantum mechanics following this
second approach is also unable to meet the criteria of a classical
operational model, in particular, condition (iii) which requires
non-negative conditional probabilities. We now show that this is
impossible by proving that there does not exist a dual frame of
positive operators for a frame of positive operators.

\begin{theorem}
There does not exist a dual frame of positive operators for a frame of positive operators.
\end{theorem}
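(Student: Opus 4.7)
The plan is to argue by contradiction: assume that, in addition to the positive frame $\F(\alpha)\geq 0$, there exists a dual frame $\E(\alpha)\geq 0$, and derive that the identity operator would have to vanish.

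The key identity comes from applying the dual-frame relation \eref{def_dual} to a rank-one projector $\op{\phi}{\phi}$ and then pairing the result with another rank-one projector $\op{\chi}{\chi}$ via the Hilbert--Schmidt inner product. Using $\ip{\op{\phi}{\phi}}{\op{\chi}{\chi}}=\abs{\ip{\phi}{\chi}}^2$ this yields
\begin{equation*}
\abs{\ip{\phi}{\chi}}^2 \;=\; \int_\Gamma \rmd\mu(\alpha)\,\ip{\phi}{\F(\alpha)\phi}\,\ip{\chi}{\E(\alpha)\chi}.
\end{equation*}
Whenever $\phi\perp\chi$ the left-hand side is zero, while the integrand is pointwise non-negative because both $\F(\alpha)$ and $\E(\alpha)$ are positive. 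Hence $\ip{\phi}{\F(\alpha)\phi}\,\ip{\chi}{\E(\alpha)\chi}=0$ for $\mu$-almost every $\alpha$, for each fixed orthogonal pair $(\phi,\chi)$.

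Next I would fix two orthonormal bases $\{e_i\}$ and $\{f_j\}$ of $\Hilbert$ chosen so that no $e_i$ is proportional to any $f_j$ (possible for $d\geq 2$). Since there are only finitely many orthogonal pairs drawn from these two bases, the union of the corresponding null sets is itself $\mu$-null; on its complement $\Omega$ the product identity holds for every such pair simultaneously. The constraints $\ip{e_i}{\F(\alpha)e_i}\,\ip{e_j}{\E(\alpha)e_j}=0$ for $i\neq j$, together with the fact that a zero diagonal entry of a positive operator forces the entire corresponding row and column to vanish, imply that for each $\alpha\in\Omega$ either $\E(\alpha)=0$ or $\F(\alpha)$ is proportional to $\op{e_{i_0}}{e_{i_0}}$ for some index $i_0$. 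Applying the same reasoning in the $\{f_j\}$-basis gives either $\E(\alpha)=0$ or $\F(\alpha)$ proportional to $\op{f_{j_0}}{f_{j_0}}$ for some $j_0$. Combining the two, if $\E(\alpha)\neq 0$ then $\F(\alpha)$ is simultaneously proportional to $\op{e_{i_0}}{e_{i_0}}$ and $\op{f_{j_0}}{f_{j_0}}$, which by our choice of bases forces $\F(\alpha)=0$. Thus in every case $\tr(\F(\alpha))\,\E(\alpha)=0$ as an operator-valued function.

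The contradiction now follows by applying \eref{def_dual} to $\A=\id$:
\begin{equation*}
\id \;=\; \int_\Gamma \rmd\mu(\alpha)\,\tr(\F(\alpha))\,\E(\alpha)\;=\;0.
\end{equation*}
The main technical obstacle is the pointwise analysis on $\Omega$: one must carefully exploit positivity to upgrade the bilinear vanishing in each basis to a rank-one rigidity for $\F(\alpha)$, and then use the incompatibility of two generic orthonormal bases to collapse even that rank-one possibility. Once this is done, the $\A=\id$ step produces the contradiction in a single line.
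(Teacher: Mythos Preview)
Your argument is correct and genuinely different from the paper's. The paper proves the theorem via the Choi--Jamiolkowski isomorphism: if both $\F$ and $\E$ were positive, the Choi operator $J(\tilde\Phi)=\int_\Gamma\rmd\mu(\alpha)\,\E(\alpha)\otimes\F(\alpha)$ of the reconstruction map would be separable, whereas $J(\tilde\id)$ (essentially the unnormalized maximally entangled state) is not. This is short and conceptually ties the result to entanglement-breaking channels, but it imports a non-trivial fact from entanglement theory.

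Your route is more elementary and self-contained. By pairing the dual-frame identity with rank-one projectors you reduce the problem to the pointwise vanishing of products of non-negative diagonal entries, and the ``two generic bases'' trick forces $\F(\alpha)=0$ or $\E(\alpha)=0$ almost everywhere, after which $\A=\id$ yields the contradiction in one line. The cost is a short case analysis and the (easy, for $d\geq 2$) construction of two orthonormal bases sharing no ray; the benefit is that no Choi--Jamiolkowski machinery or separability facts are invoked. One small point worth making explicit in a final write-up: your use of ``proportional to $\op{e_{i_0}}{e_{i_0}}$'' must include the zero operator, and the conclusion ``$\F(\alpha)=0$ or $\E(\alpha)=0$ a.e.'' is really what drives the last step.
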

\begin{proof}
Consider
the mapping
\begin{equation}\label{entanglementbreaking}
\tilde\Phi(\A)=\int_\Gamma \rmd\mu(\alpha)\ip{\F(\alpha)}{\A}\E(\alpha),
\end{equation}
If $\tilde \Phi$ were the identity super-operator, then by
definition $\E$ would be the dual frame of $\F$.  We will show
this is not possible when both $\F$ and $\E$ are positive frames.
Let $\{\op{\phi_i}{\phi_j}: i,j\in\integer_d\}$ be the standard basis
for $\lin\Hilbert$.  Then the Choi-Jamiolkowski \cite{Jam72} representation of
$\tilde\Phi$ is
\begin{eqnarray}
J(\tilde\Phi)&=\sum_{i,j\in\integer_d}\tilde\Phi(\op{\phi_i}{\phi_j})\otimes \op{\phi_i}{\phi_j}\\
&=\int_\Gamma \rmd\mu(\alpha)\left(\sum_{i,j\in\integer_d}\langle{\phi_j}|{\F(\alpha)|\phi_i}\rangle\E(\alpha)\otimes \op{\phi_i}{\phi_j}\right)\\
&=\int_\Gamma \rmd\mu(\alpha)\left(\E(\alpha)\otimes \sum_{i,j\in\integer_d}\langle{\phi_j}|{\F(\alpha)|\phi_i}\rangle\op{\phi_i}{\phi_j}\right)\\
&=\int_\Gamma \rmd\mu(\alpha)\left(\E(\alpha)\otimes\F(\alpha)\right),
\end{eqnarray}
which is a separable operator (a convex combination of positive operators of the form $\A\otimes\B$) on
$\Hilbert\otimes\Hilbert$ when both $\F$ and $\E$ are positive
frames. However, $J(\tilde\id)$ is not a separable operator on
$\Hilbert\otimes\Hilbert$ and thus $\tilde\Phi$ cannot be the
identity super-operator.  Hence $\E$ cannot be a dual frame of
$\F$.
\end{proof}
This theorem can also be proven using the results of Reference \cite{HSR03}.
Theorem 2 of that paper shows that the channel $\tilde\Phi$ defined
by \Eref{entanglementbreaking} for positive operators $\F$ and $\E$
is so-called \emph{entanglement breaking}.  However, Theorem 6 of Reference
\cite{HSR03} states that if $\tilde\Phi$ has fewer than $d$ Kraus
operators, it is \emph{not} entanglement breaking.  Since the
identity superoperator has fewer than $d$ Kraus operators,
$\tilde\Phi$ is not entanglement breaking and therefore $\E$ is not
the dual of $\F$.

Hence, although quantum states can always be represented as
non-negative probabilities, measurement functions must then take on
negative values, or vice-versa.  In this way we have a direct proof
that there does not exist any choice of quasi-probability
representation of quantum mechanics that can be made consistent with
the non-negativity conditions associated with a classical model of
statistical events.

\section{Discussion\label{conclusion}}

Our results prove that the full spectrum of experimental statistics
prescribed by finite dimensional quantum theory can not be described
by any classical model consisting of the usual rules of probability
applied over an arbitrary choice of property (or hidden variable)
space. Equivalently stated, there does not exist a space of events
upon which one can formulate a non-negative quasi-probability
representation of quantum mechanics.

A promising application of the formalism we have developed is to
address the question of whether a restricted set of preparations and
measurements involves non-classical resources. This question has
arisen in the context of the degree of coherent control over quantum
systems, for example, in experiments involving nuclear magnetic
resonance or super-conducting devices, where the quantum states and
effects that can be achieved are restricted due to thermalization
and decoherence. Our formalism leads to a broadly enabling and
rigorous approach to determining the extent to which quantum effects
are indeed present in those systems. Another context in which this
questions arises is the field of quantum information and
computation. One has a task which can be achieved with a restricted
set of quantum preparations and effects and one would like to know
whether non-classical resources are actually required for that task.
In both of these contexts, if we can identify a particular frame and
a dual which can represent the restricted set of states and
measurements as non-negative functions then we can show that the
task or process can be represented as a classical statistical
process, and hence prove that it does not require quantum resources.
Conversely, if one can prove that no such choice of frames exists,
then one can prove that quantum resources are indeed necessary.

Finally, we conclude by addressing the question of how the notion of
non-classicality established by the absence of non-negative
quasi-probability representation relates to another fundamental
notion of non-classicality in quantum mechanics, namely,
\emph{contextuality}. The traditional definition of contextuality
comes from a theorem due to Kochen and Specker \cite{KS67}. The
Kochen-Specker theorem establishes a contradiction between a set of
plausible assumptions associated with the idea that quantum systems
possess pre-existing values for the outcomes of measurements, as is
the case in the classical world. Assuming that physical systems do
possess pre-existing values that are revealed via measurements, the
Kochen-Specker theorem leads to the following counterintuitive
property that such pre-existing values must satisfy \cite{Ish95}:
suppose three operators $\A$, $\B$, and $\hat C$ satisfy
$[\A,\B]=0=[\A,\hat C]$, but $[\B,\hat C]\neq0$, then the
pre-existing value of the observable $A$ will depend on whether
observable $B$ or $C$ is will be measured along with $A$. That is,
the pre-existing value of $A$ depends on the \emph{context} of the
measurement. We note that the notion of context independence is at
the heart of Bell-type inequalities, where the pre-existing values
of the commuting operators in question are required to be context
independent by appealing to local causality.

Spekkens has generalized the notion of \emph{non-contextuality} from
the idea that outcomes of individual measurements are independent of
the measurement context to the requirement that the
\emph{probabilities} for outcomes of measurements are independent of
the measurement context \cite{Spe05}.  This is achieved by
formulating a definition of contextuality for an arbitrary
operational theory and includes a notion of contextuality for
preparation procedures (states) as well as measurements. In
Reference~\cite{Spe07}, Spekkens has shown that a quasi-probability
representation of quantum mechanics which excludes negativity is
equivalent to the generalized notion of non-contextuality that he
proposed in Reference \cite{Spe05} and has obtained an independent
proof the impossibility of constructing a non-negative
quasi-probability representation. Interestingly, in light of this
connection our direct proof of the non-existence of a positive dual
frame to a positive frame gives a new independent proof of the
generalized contextuality of quantum mechanics.

In this paper we have shown that using frame theory provides a
formalism that unifies the known quasi-probability representations
of quantum \emph{states}.  We have shown two different ways (the
\emph{deformed} and \emph{quasi-probability} approach) to lift a
quasi-probability representation of states to a consistent and
equivalent formulation of quantum mechanics.  We have also proved
that these quasi-probability representations of quantum states and
\emph{measurements} require either negativity or a deformation of
the rule for calculating probabilities.  We have thus given a
mathematically rigorous set of criteria that establish the (long
suspected) connection between negativity and non-classicality. While
the results of this paper have been proven only for finite
dimensional Hilbert spaces (although allowing for either finite or
continuous representation spaces), we conjecture that the results
continue to hold also for infinite dimensional quantum systems
(i.e., all separable Hilbert spaces).

\ack The authors thank John Watrous for the simplified proof of the
negativity result and Bernard Bodmann, Matt Leifer, Etera Livine,
and Rob Spekkens for helpful discussions. This work was supported by
NSERC and MITACS.
%\end{acknowledgments}

\appendix

\section*{Appendix}

For completeness we describe here how to formulate quantum mechanics
directly in either the deformed and quasi-probability
representations without appealing to the axioms of quantum mechanics
in their usual formulation (i.e., in terms of positive linear
operators on finite-dimensional Hilbert space). Recall that
previously, in the deformed probability representation, a
quasi-probability density was defined as a function in the range of
a frame representation when the domain is restricted to the density
operators.  Similarly the conditional quasi-probabilities were those
functions in the range of a frame representation when the domain is
restricted to the effects. Of course, for a particular choice of
frame, not every function in $L^2(\Gamma,\mu)$ will correspond to a
valid quantum state or effect.  Hence we need a set of
\emph{internal} conditions, without appealing to the nature of the
linear operators in the standard formulation of quantum theory,
which characterize the valid state and measurement functions in
$L^2(\Gamma,\mu)$.  The conditions can be found by simply noting
that the frame representation \Eref{def_frame_rep} is an isometric
and algebraic isomorphism from $\herm\Hilbert$ to $L^2(\Gamma,\mu)$
equipped with inner product
\begin{equation*}
\ip{A}{B}_\tF:=\int_{\Gamma^2}\rmd\mu(\alpha,\beta)\;{A(\alpha)}B(\beta)\tF(\alpha,\beta),\label{def_ipGamma}
\end{equation*}
where $\tF(\alpha,\beta):=\ip{\E(\alpha)}{\E(\beta)}$, and algebraic multiplication
\begin{equation*}\label{def_starprod}
(A\star_\fF B)(\alpha):=\int_{\Gamma^2}\rmd\mu(\beta,\gamma)\;A(\beta)
B(\gamma) \fF(\alpha,\beta,\gamma),
\end{equation*}
where $\fF(\alpha,\beta,\gamma)=\ip{\F(\alpha)}{\E(\beta)
\E(\gamma)}$.

Using the above, we first state the conditions for a function in
$L^2(\Gamma,\mu)$ to be a valid state or effect in the deformed
probability representation. A \emph{pure state} is a function
$\rho_\pure\in L^2(\Gamma,\mu)$ satisfying $\rho_\pure\star_\fF
\rho_\pure=\rho_\pure$.  A general \emph{state} is a function
$\rho\in L^2(\Gamma,\mu)$ satisfying
$\ip{\rho}{\rho_\pure}_\tF\geq0$ for all pure states and
$\int_\Gamma d\mu(\alpha)\rho(\alpha)=1$.  A \emph{measurement} is
represented by a set $\{M_k\in L^2(\Gamma,\mu)\}$ of \emph{effects}
which satisfies $\ip{M_k}{\rho_\pure}_\tF\geq0$ for all pure states
and for which $\sum_{k}M_k=\id,$ where $\id$ is the identity element
in $L^2(\Gamma,\mu)$ with respect to the algebra defined by
$\star_\fF$.

For quasi-probability representations of quantum mechanics, the term
quasi-probability density has the same meaning as in the deformed
probability representation.  Similarly, the conditional
quasi-probabilities are those functions in the range of the frame
representation of the measurements (i.e. the frame representation
defined via the dual $\E$) when the domain is restricted to the
effects.  In this representation states and measurements in
$L^2(\Gamma, \mu)$ must again meet certain criteria to be valid. The
conditions are similar to those in the deformed probability
representation.  In particular, the pure states and general states
are equivalently characterized. However, a measurement is now
represented by a set $\{M_k\in L^2(\Gamma,\mu)\}$ which satisfies
$\ip{M_k}{\rho_\pure}\geq0$ (now the usual pointwise inner product)
for all pure states and for which $\sum_{k}M_k=\id,$ where $\id$ is
the identity element in $L^2(\Gamma,\mu)$ with respect to the
algebra defined by $\star_{\mathfrak E}$ (which is defined in the
same way as $\star_\fF$ with the roles of the frame and its dual
reversed).

\bibliographystyle{unsrt}
\section*{References}
\bibliography{mybib}

\begin{thebibliography}{10}

\bibitem{Wig32}
Eugene Wigner.
\newblock On the quantum correction for thermodynamic equilibrium.
\newblock {\em Phys. Rev.}, 40:0749, 1932.

\bibitem{Bak58}
George~A. Baker.
\newblock Formulation of quantum mechanics based on the quasi-probability
  distribution induced on phase space.
\newblock {\em Phys. Rev.}, 109:2198, 1958.

\bibitem{Woo87}
William~K. Wootters.
\newblock A wigner-function formulation of finite-state quantum mechanics.
\newblock {\em Ann. Phy.}, 176:1, 1987.

\bibitem{Leo95}
Ulf Leonhardt.
\newblock Quantum state tomography and discrete wigner functions.
\newblock {\em Phys. Rev. Lett.}, 74:4101, 1995.

\bibitem{HW00}
Stephan Heiss and Stefan Weigert.
\newblock Discrete moyal-type representations for a spin.
\newblock {\em Phys. Rev. A}, 63:012105, 2000.

\bibitem{GHW04}
Kathleen~S. Gibbons, Matthew~J. Hoffman, and William~K. Wootters.
\newblock Discete phase space based on finite fields.
\newblock {\em Phys. Rev. A}, 70:062101, 2004.
\newblock http://arxiv.org/abs/quant-ph/0401155v6.

\bibitem{Vou04}
A.~Vourdas.
\newblock Quantum systems with finite hilbert space.
\newblock {\em Rep. Prog. Phys.}, 67:267, 2004.

\bibitem{VG72}
Joseph~C. V\'{a}rilly and Jos\'{e}~M. Gracia-Bond\'{i}a.
\newblock The moyal representation for spin.
\newblock {\em Ann. Phys.}, 190:107, 1972.

\bibitem{Har01}
Lucien Hardy.
\newblock Quantum theory from five reasonable axioms, 2001.
\newblock http://arxiv.org/abs/quant-ph/0101012v4.

\bibitem{Hav03}
Timothy~F. Havel.
\newblock The real density matrix.
\newblock {\em Quantum Information Processing}, 1:511, 2003.
\newblock http://arxiv.org/abs/quant-ph/0302176v5.

\bibitem{Paz02}
Juan~Pablo Paz.
\newblock Discrete wigner functions and the phase-space representation of
  quantum teleportation.
\newblock {\em Phys. Rev. A}, 65:062311, 2002.
\newblock http://arxiv.org/abs/quant-ph/0204150v1.

\bibitem{LP03}
Cecilia~C. L\'{o}pez and Juan~Pablo Paz.
\newblock Phaes-space approach to the study of decoherence in quantum walks.
\newblock {\em Phys. Rev. A}, 68:0052305, 2003.
\newblock http://arxiv.org/abs/quant-ph/0308104v3.

\bibitem{MPS02}
Cesar Miquel, Juan~Pablo Paz, and Marcos Saraceno.
\newblock Quantum computers in phase space.
\newblock {\em Phys. Rev. A}, 65:062309, 2002.

\bibitem{Gal05}
Ernesto~F. Galvao.
\newblock Discrete wigner functions and quantum computational speedup.
\newblock {\em Phys. Rev. A}, 71:042302, 2005.
\newblock http://arxiv.org/abs/quant-ph/0405070v2.

\bibitem{CGG06}
Cecilia Cormick, Ernesto~F. Galvao, Daniel Gottesman, Juan~Pablo Paz, and
  Arthur~O. Pittenger.
\newblock Classicality in discrete wigner functions.
\newblock {\em Phys. Rev. A}, 73:012301, 2006.

\bibitem{GJ07}
D.~Gross and J.~Eisert.
\newblock Quantum margulis expanders, 2007.
\newblock http://arxiv.org/abs/quant-ph/0710.0651v1.

\bibitem{Chr03}
Ole Christensen.
\newblock {\em Introduction to Frames and Riesz Bases}.
\newblock Birkh{\"a}user, Boston, 2003.

\bibitem{Spe07}
Rob Spekkens.
\newblock Negativity and contextuality are equivalent notions of
  nonclassicality, 2007.
\newblock http://arxiv.org/abs/quant-ph/0710.5549v1.

\bibitem{Jam72}
A.~Jamiolkoski.
\newblock Linear transformations which preserve trace and positive
  semidefiniteness of operators.
\newblock {\em Rep. Math. Phys.}, 3:275, 1972.

\bibitem{HSR03}
Michael Horodecki, Peter~W. Shor, and Mary~Beth Ruskai.
\newblock Entanglement breaking channels.
\newblock {\em Rev. Math. Phys.}, 15:269, 2003.
\newblock http://arxiv.org/abs/quant-ph/0302031v2.

\bibitem{KS67}
S.~Kochen and E.~Specker.
\newblock The problem of hidden variables in quantum mechanics.
\newblock {\em J. Math. Mech.}, 17:59, 1967.

\bibitem{Ish95}
Chris~J. Isham.
\newblock {\em Lectures on Quantum Theory: Mathematical and Structural
  Foundations}.
\newblock Imperial College Press, London, 1995.

\bibitem{Spe05}
Rob~W. Spekkens.
\newblock Contextuality for preparations, transformations, and unsharp
  measurements.
\newblock {\em PRA}, 71:052108, 2005.

\end{thebibliography}

\end{document}